\documentclass[conference]{IEEEtran}

\IEEEoverridecommandlockouts

\usepackage{amsmath,amsthm,relsize}
\usepackage{amsfonts, amssymb, cuted}
\usepackage{mathtools}
\usepackage{algorithm}
\usepackage[noend]{algpseudocode}
\usepackage{cite}
\usepackage{xcolor}
\usepackage{soul}
\usepackage{graphicx}
\usepackage{pgfplotstable}
\usepackage{pgfplots}
\usepackage[font=footnotesize]{caption}
\usepackage{multirow}
\pgfplotsset{compat=1.14}

\newtheorem{lem}{Lemma}

\theoremstyle{definition}
\newtheorem{exmp}{Example}

\newcommand{\subparagraph}{}
\usepackage{titlesec}


 \setlength{\textfloatsep}{2pt plus 3pt minus 2pt}

 \setlength{\abovecaptionskip}{3pt plus 3pt minus 2pt} 
 \setlength{\belowcaptionskip}{3pt plus 3pt minus 2pt} 
 \setlength{\belowdisplayskip}{3pt} \setlength{\belowdisplayshortskip}{5pt}
 \setlength{\abovedisplayskip}{3pt} \setlength{\abovedisplayshortskip}{5pt}

\begin{document}

\title{Subpacketization-Rate Trade-off\\in Multi-Antenna Coded Caching}
\author{
\IEEEauthorblockN{
MohammadJavad Salehi\IEEEauthorrefmark{1},
Antti T\"olli\IEEEauthorrefmark{1},
Seyed Pooya Shariatpanahi\IEEEauthorrefmark{2},
Jarkko Kaleva\IEEEauthorrefmark{1}
}
\\
\IEEEauthorblockA{
\IEEEauthorrefmark{1}Center for Wireless Communications, University of Oulu, Oulu, Finland.}
\IEEEauthorblockA{
\IEEEauthorrefmark{2}School of Computer Science, Institute for Research in Fundamental Sciences (IPM), Tehran, Iran.}
\IEEEauthorblockA{\{fist\_name.last\_name\}@oulu.fi, pooya@ipm.ir
\vspace{-4ex}}
	\thanks{This work was supported by the Academy of Finland under grants no. 319059 (Coded Collaborative Caching for Wireless Energy Efficiency) and 318927 (6Genesis Flagship).}
}

\maketitle
\begin{abstract}
Coded caching can be applied in wireless multi-antenna communications by multicast beamforming coded data chunks to carefully selected user groups and using the existing file fragments in user caches to decode the desired files at each user. However, the number of packets a file should be split into, known as subpacketization, grows exponentially with the network size. We provide a new scheme, which enables the level of subpacketization to be selected freely among a set of predefined values depending on basic network parameters such as antenna and user count. A simple efficiency index is also proposed as a performance indicator at various subpacketization levels. The numerical examples demonstrate that larger subpacketization generally results in better efficiency index and higher symmetric rate, while smaller subpacketization incurs significant loss in the achievable rate. This enables more efficient caching schemes, tailored to the available computational and power resources.
\end{abstract}
\begin{IEEEkeywords}
multi-antenna coded caching,
multicast beamforming,
flexible subpacketization
\end{IEEEkeywords}

\section{Introduction}
Global mobile data traffic has been subject to consistent strong growth during the recent years. Cisco visual networking index \cite{index2019global} estimates the global mobile traffic to reach 77 Exabytes per month by 2022, more than six-fold increase compared to 2017. On the other hand, more than 80\% of this number is expected to be generated by mobile video, and as a result research efforts in efficient video delivery have been intensified in recent years. Caching content closer to the end-users, known as \textit{Edge Caching}, is among the promising solutions proposed in this direction \cite{bastug2014living, wang2014cache}. Coded caching, first appeared in the pioneering work of \cite{maddah2014fundamental}, is an interesting approach to edge caching for wireless communications, through which one can take benefit of the broadcast nature of wireless channels. It is shown in \cite{maddah2014fundamental} that caching at end-user locations and multicasting coded file chunks to carefully selected user groups enables reduction in delivery bandwidth over the broadcast channel.

The original scheme of \cite{maddah2014fundamental} assumes a separate, offline cache placement phase and an error-free broadcast channel to the end-users. Since then, significant effort has been carried out by the research community to apply the coded caching concept to more general network setups. For example, in \cite{maddah2015decentralized,pedarsani2016online,karamchandani2016hierarchical} decentralized, online and hierarchical coded caching are considered, respectively. Similarly, extension of \cite{maddah2014fundamental} to multi-server scenario is studied in \cite{shariatpanahi2016multi}, and the same concept is then applied to multi-antenna wireless communications in \cite{shariatpanahi2017multi, tolli2017multi, tolli2018linear, lampiris2018adding}.

We focus on the application of coded caching in multi-antenna wireless communications and the well-known problem of subpacketization, which means the number of packets each file should be split into, for a caching scheme to work properly \cite{lampiris2018adding}. The very large subpacketization required by schemes presented in \cite{shariatpanahi2017multi, tolli2017multi, tolli2018linear} makes them infeasible for even moderate-sized networks, and at the same time adds considerably to the complexity of beamformer design. Reducing subpacketization is partially tackled in \cite{lampiris2018adding}, by assigning users to fixed size groups and using zero-forcing (ZF) and finite field summation for intra- and inter-group interference cancellation, respectively. However, as will be discussed, this scheme also results in rate loss compared to \cite{tolli2017multi,tolli2018linear}.

In this paper, we provide a new coded caching scheme, which enables the selection of a desired subpacketization level among a set of possible values depending on basic network parameters such as user count, number of available antennas and the global cache ratio (as defined in \cite{maddah2014fundamental}). We show that the schemes provided in \cite{shariatpanahi2017multi} and \cite{lampiris2018adding} are two extreme cases of our scheme, built with the highest and lowest possible subpacketization levels, respectively. We also provide a simple efficiency index, which is an indicator of how well the performance (in terms of symmetric rate) will be for a specific subpacketization level. We 
show that larger subpacketization usually results in higher efficiency index; and higher communication rate accordingly. Particularly, the schemes of \cite{tolli2017multi} and \cite{lampiris2018adding} provide the highest and lowest symmetric rate, and the rate difference is more considerable at low-SNR regime.


In this paper, we use the following notations. Boldface lower case and capital letters indicate vectors and matrices, respectively. Sets are shown with calligraphic letters. For a positive integer $K$, $[K]$ means the set $\{1,2,...,K \}$. For two sets $\mathcal{S}$ and $\mathcal{T}$, $|\mathcal{S}|$ indicates the number of elements in $\mathcal{S}$ and $\mathcal{S} \backslash \mathcal{T}$ is the set of elements in $\mathcal{S}$ which are not in $\mathcal{T}$. For any vector $\mathbf{u}$, $\| \mathbf{u} \|$ is the second norm of $\mathbf{u}$. 

\section{System Model and Literature Review}
\label{Section:system_model}
We consider $K$ users in a MISO wireless channel with $L$ antennas at the transmitter. Each user has a cache memory of size $MF$ bits. The cache contents are updated during the placement phase, which can be for example during off-peak hours. During the delivery phase, users request files from a library of size $N$, where each file is of size $F$ bits. In order to fulfill the requests, each requesting node uses its cache content together with the data received over the communication channel. The problem is to design placement and delivery phases, such that the time needed for all users to decode their requested files is minimized; or equivalently, the symmetric rate of all users is maximized.

In \cite{maddah2014fundamental} it is shown that a global caching gain, proportional to the total cache size available in the network, is achievable in addition to the local gain of each single cache memory. This added gain is obtained by multicasting coded data chunks to multiple cache-enabled users, so we use the term \textit{multicasting gain} interchangeably with global caching gain in this paper.

Defining global cache ratio $t = \frac{KM}{N}$, the scheme in \cite{maddah2014fundamental} provides a multicasting gain of $t+1$. In \cite{shariatpanahi2016multi} it is shown that using $L$ servers simultaneously, the multicasting gain increases to $t+L$. In \cite{shariatpanahi2017multi} this result is applied to the wireless communication scenario with $L$ antennas at the transmitter. Here we briefly review the scheme of \cite{shariatpanahi2017multi}, as it shares many common parts with the new scheme presented in this paper.

The placement phase of \cite{shariatpanahi2017multi} requires each file $W$ to be first split into $\binom{K}{t}$ equal-sized packets $W_{p(\mathcal{T})}$, where $p(\mathcal{T})$ assigns a unique index for each $\mathcal{T} \subseteq [K]$ with $|\mathcal{T}|=t$. Each packet is then further split into 
\begin{equation}
\label{eq:Q_def}
    Q = \binom{K-t-1}{L-1}
\end{equation}
equal-sized subpackets $W_{p(\mathcal{T})}^q$; and each $W_{p(\mathcal{T})}^q$ is stored at the cache memory of all users $k \in \mathcal{T}$. 
During the delivery phase, for each $\mathcal{S} \subseteq [K]$ with $|\mathcal{S}|=t+L$ a separate vector $\mathbf{x}(\mathcal{S})$ is transmitted. Assume user $k$ requests the file $W(k)$. For simplicity, we ignore the time index for vectors $\mathbf{x}(\mathcal{S})$ sent in TDMA fashion. To create $\mathbf{x}(\mathcal{S})$, first for each $\mathcal{V} \subseteq \mathcal{S}$ with $|\mathcal{V}| = t+1$, codeword $X(\mathcal{V})$ is built as
\begin{equation}
    X(\mathcal{V}) = \bigoplus \limits_{k \in \mathcal{V}} W_{p(\mathcal{V} \backslash \{k\})} ^ {q(k,\mathcal{V})} (k) \; ,
\end{equation}
where $\oplus$ denotes bit-wise XOR and $q(k,\mathcal{V})$ is defined such that it is guaranteed that each subpacket is transmitted only once and also after all transmissions are concluded, each user is able to decode its requested file (c.f. \cite{shariatpanahi2017multi}). Next, vector $\mathbf{u}(\mathcal{V})$ is defined such that $\| \mathbf{u}(\mathcal{V}) \|  = 1$ and
\begin{equation}
    \mathbf{u}(\mathcal{V}) \perp \mathbf{h}_k \qquad \forall k \in \mathcal{S} \backslash \mathcal{V} \; ,
\end{equation}
where $\perp$ denotes perpendicularity and $\mathbf{h}_k$ is the $L \times 1$ channel vector from base station to user $k$. Finally, $\mathbf{x}(\mathcal{S})$ is built as
\begin{equation}
\label{eq:transmission_vector_power_optimization}
    \mathbf{x}(\mathcal{S}) = \sum_{\mathcal{V}\subseteq \mathcal{S}}  \mathbf{u}(\mathcal{V})p(\mathcal{V})X(\mathcal{V}) \; ,
\end{equation}
where $p(\mathcal{V})$ is the power allocated to the transmission of codeword $X(\mathcal{V})$. It can be verified that by transmitting $\mathbf{x}(\mathcal{S})$, each user $k \in \mathcal{S}$ is able to decode part of $W(k)$ using its cache contents, and after concluding successive transmissions for all $\mathcal{S} \subseteq [K]$, all users in the system are able to completely decode their requested files. Specifically, after a single transmission $\mathbf{x}(\mathcal{S})$ is concluded, each user $k \in \mathcal{S}$ faces a MAC channel with $\binom{t+L-1}{t}$ terms, and the transmission rate and power should be adjusted such that simultaneous decoding of all terms is possible at every user.

Although the scheme of \cite{shariatpanahi2017multi} achieves the full multicasting gain of $t+L$ for each transmission, it has major drawbacks. First, the required subpacketization grows exponentially with $K$, making the scheme infeasible for even moderate values of $K$ \cite{lampiris2018adding}. Second, ZF results in poor rate specially in low-SNR regime \cite{tolli2018linear}. Finally, the complexity of successive interference cancellation for decoding all terms in the MAC channel grows exponentially with the MAC size \cite{tolli2017multi}. Although workarounds for drawbacks are available in the literature \cite{lampiris2018adding,tolli2018linear,tolli2017multi}, unfortunately improvements for any drawback causes the others to become worse. As a quick review, the scheme provided in \cite{lampiris2018adding} divides users into groups of size $L$ and uses ZF and finite field summation to mitigate both intra- and inter-group interference. Although this solution results in a dramatic drop in the required subpacketization (specially if $\frac{t}{L}$ is an integer), it provides a lower symmetric rate than other schemes, as demonstrated later in Section \ref{Section:Simulation_Results}. On the other hand, the scheme of \cite{tolli2018linear} suggests designing optimized multicast beamformers instead of ZF to improve the performance at low- and mid-SNR. More precisely, instead of \eqref{eq:transmission_vector_power_optimization} $\mathbf{x}(\mathcal{S})$ is created as
\begin{equation}
\label{eq:transmission_vector_beamformer}
    \mathbf{x}(\mathcal{S}) = \sum_{\mathcal{V}\subseteq \mathcal{S}}  \mathbf{w}(\mathcal{V})X(\mathcal{V}) \; ,
\end{equation}
where $\mathbf{w}(\mathcal{V})$ is the general beamformer vector. The unwanted terms then appear as (Gaussian) interference instead of being nulled at each user. Although this formulation significantly improves the rate at low-SNR, considering interference between parallel multi-group multicast messages makes the beamformer design significantly more complex \cite{tolli2017multi}. Finally, different schemes with reduced complexity are proposed in \cite{tolli2017multi}. However, in certain scenarios, this may have further detrimental impact on the required subpacketization. 

In this paper we provide a new scheme, achieving full multicasting gain of $t+L$ with subpacketization $P \times Q$, where $Q$ is defined in \eqref{eq:Q_def} and $P$ can be selected freely among a set of predefined values.
We show that the schemes of \cite{shariatpanahi2017multi} and \cite{lampiris2018adding} are two extreme cases of this new scheme, with $P=\binom{K}{t}$ and $P=\binom{K/L}{t/L}$ respectively\footnote{For non-integer $K/t$ and $t/L$, $\lfloor K/t \rfloor$ and $\lceil t/L \rceil$ are used.}.

\begin{figure}
    \centering
    \resizebox{0.9\columnwidth}{!}{%
    
    \begin{tikzpicture}

    \begin{axis}
    [
    axis lines = left,
    xlabel = \smaller {SNR [dB]},
    ylabel = \smaller {Symmetric Rate [nats/s]},
    ylabel near ticks,
    legend pos = south east,
    ticklabel style={font=\smaller},
    grid=both,
    major grid style={line width=.2pt,draw=gray!30},
    ]
    
    \addplot
    [
    mark = +,
    black
    ]
    table
    [
    y=P6,
    x=SNR
    ]
    {Data/Exmp_K4_Plot.txt};
    \addlegendentry{\smaller 6 Subpackets}
    
    \addplot
    [
    mark = x,
    gray
    ]
    table
    [
    y=P4,
    x=SNR
    ]
    {Data/Exmp_K4_Plot.txt};
    \addlegendentry{\smaller 4 Subpackets}
    
    \addplot
    [
    dashed,
    mark = +,
    black
    ]
    table
    [
    y=P2,
    x=SNR
    ]
    {Data/Exmp_K4_Plot.txt};
    \addlegendentry{\smaller 2 Subpackets}

    \end{axis}

    \end{tikzpicture}
    }
    \caption{Rate vs SNR, Optimized Beamformer - $K=4$, $t=L=2$}
    \label{fig:exmp_k4_plot}
\end{figure}
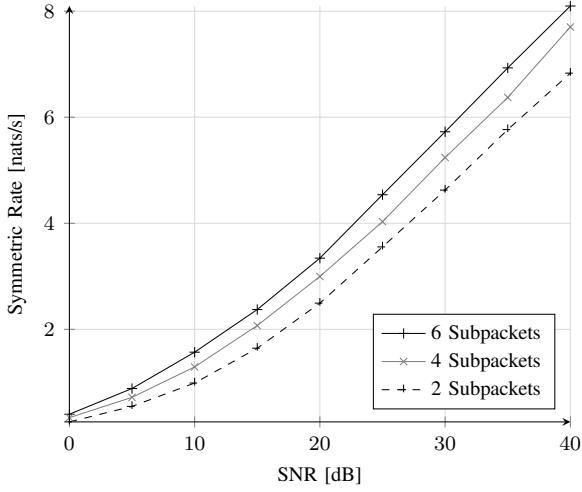

Before providing the details, let us consider a network with $K=4$ and $L=t=2$. The scheme provided in \cite{tolli2017multi,tolli2018linear} requires subpacketization $P = \binom{4}{2} = 6$, while the one presented in \cite{lampiris2018adding} reduces subpacketization to $P = \binom{2}{1} = 2$. Using the new scheme, for this network one can build another coded caching scheme with subpacketization $P = 4$. Comparison of the rate versus SNR for these three schemes is shown in Figure~\ref{fig:exmp_k4_plot}. It should be noted that the results in Figure~\ref{fig:exmp_k4_plot} are calculated using optimized beamformer design, which provides better rate than ZF used in \cite{lampiris2018adding}. Still, it is clear that increasing $P$ results in an increase in the symmetric rate. 
The rate advantage of $P=4$ over $P=2$ is $32\%$ at 0 dB, eventually decreasing to $13\%$ at 40 dB. This illustrates the stronger impact of increased subpacketization at lower SNR.

\section{Subpacketization-Selective Scheme}
\label{Section:trade_off}
First we consider the special case of $K=t+L$, and then provide an extension to more general setups.

\subsection{Cache Placement Scheme}
\label{subsec:placement_matrix}
Assume $K=t+L$ and each file $W$ is divided into $P$ equal-sized parts $W_p$. We define a binary placement matrix $\mathbf{V}$ of size $P \times K$ and represent its elements by $v_{p,k}$, where $p \in [P]$ and $k \in [K]$. If $v_{p,k} = 1$, then for any file $W$ in the file library, we store $W_p$ in the cache memory of user $k$. For example consider the three curves in Figure \ref{fig:exmp_k4_plot}. The corresponding placement matrix for $P=2$ is
\begin{equation}
\label{eq:V_mat_4_2_Petros}
    \mathbf{V} = 
    \begin{bmatrix}
        1 & 0 & 1 & 0 \\
        0 & 1 & 0 & 1
    \end{bmatrix} \; ,
\end{equation}
while for $P=4$ and $P=6$ we have used
\begin{equation}
\label{eq:V_mat_4_2_Full}
    \mathbf{V}' = 
    \begin{bmatrix}
    1 & 1 & 0 & 0 \\
    0 & 1 & 1 & 0 \\
    0 & 0 & 1 & 1 \\
    1 & 0 & 0 & 1
    \end{bmatrix}, \quad
    \mathbf{V}'' = 
    \begin{bmatrix}
        1 & 1 & 0 & 0 \\
        0 & 1 & 1 & 0 \\
        0 & 0 & 1 & 1 \\
        1 & 0 & 0 & 1 \\
        1 & 0 & 1 & 0 \\
        0 & 1 & 0 & 1
    \end{bmatrix} \; ,
\end{equation}
respectively. Denoting the cache content of user $k$ with $Z(k)$ and considering the file library $\{A,B,C,D\}$, we have
\begin{equation*}
    \begin{aligned}
        Z(1) &= \{A_1, B_1, C_1, D_1 \} \\ 
        Z'(1) &=  \{A_1,A_4,B_1,B_4,C_1,C_4,D_1,D_4 \} \\
        Z''(1) &= \{A_1, A_4, A_5, B_1, B_4, B_5, C_1, C_4, C_5, D_1, D_4, D_5\}
    \end{aligned}
\end{equation*}
corresponding to $\mathbf{V}$, $\mathbf{V}'$ and $\mathbf{V''}$, respectively. We say $\mathbf{V}$ is a \textit{valid} placement matrix, if its rows are different and 
\begin{subequations}
\begin{align}
    \label{eq:per_user_cache_size} \sum_{p} v_{p,k} = \frac{Pt}{K} \qquad & \forall k \in [K] \\
    \label{eq:per_part_cache_size} \sum_{k} v_{p,k} = t \qquad & \forall p \in [P]
\end{align}
\end{subequations}

Note that \eqref{eq:per_user_cache_size} represents the cache size constraint at each user; i.e. as the size of each packet is $\frac{F}{P}$ and there exists $N$ files in the library, the required cache size at each user becomes

\begin{equation}
    \sum_{p} v_{p,k} \times N \times \frac{F}{P} = \frac{P}{K} \times \frac{KM}{N} \times \frac{NF}{P} = MF \; ,
\end{equation}
which is exactly the available cache size at each user. Moreover, \eqref{eq:per_part_cache_size} guarantees that each packet is replicated for the same number of times throughout all cache memories in the network. For example, it can be easily checked that all the placement matrices in \eqref{eq:V_mat_4_2_Petros} and \eqref{eq:V_mat_4_2_Full} are valid.
According to \eqref{eq:per_user_cache_size} and \eqref{eq:per_part_cache_size} it is clear that reordering rows or columns in a valid placement matrix results in another valid placement matrix. The following lemmas represent the relationship between valid placement matrices and caching schemes of \cite{shariatpanahi2017multi} and \cite{lampiris2018adding}.

\begin{lem}
\label{lem:full_placement}
If $K=t+L$, the cache placement scheme of \cite{shariatpanahi2017multi} is equivalent to a valid placement matrix with $P = \binom{K}{t}$.
\end{lem}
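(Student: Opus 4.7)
The plan is to exhibit an explicit $P \times K$ binary matrix built directly from the subset-based placement of \cite{shariatpanahi2017multi} and then check that it meets all three conditions defining validity (distinct rows together with \eqref{eq:per_user_cache_size} and \eqref{eq:per_part_cache_size}). Since the hypothesis $K=t+L$ only fixes the value of $L$ in terms of $K$ and $t$ but places no restriction on the placement step itself, I expect the proof to be essentially a counting verification; the assumption $K=t+L$ is used only to make $P=\binom{K}{t}$ the "correct" number of rows in the context of the surrounding framework.

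First, I would index the rows of $\mathbf{V}$ by the $\binom{K}{t}$ subsets $\mathcal{T}\subseteq[K]$ with $|\mathcal{T}|=t$, using the bijection $p(\cdot)$ already introduced when describing the scheme of \cite{shariatpanahi2017multi}. Define
\[
v_{p(\mathcal{T}),k} \;=\; \begin{cases} 1 & \text{if } k\in\mathcal{T},\\ 0 & \text{otherwise.}\end{cases}
\]
This is exactly the placement rule of \cite{shariatpanahi2017multi} after aggregating the $Q$ subpackets $W_{p(\mathcal{T})}^{q}$ that share the same index $p(\mathcal{T})$: the packet $W_{p(\mathcal{T})}$ (obtained by concatenating its $Q$ subpackets) ends up in the cache of user $k$ if and only if $k\in\mathcal{T}$, which is exactly what $v_{p(\mathcal{T}),k}=1$ encodes. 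So establishing equivalence reduces to showing that this matrix is valid.

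Next I would verify the three requirements. Distinctness of rows is immediate since the support of row $p(\mathcal{T})$ is the set $\mathcal{T}$ itself, and distinct subsets have distinct supports. For \eqref{eq:per_part_cache_size}, the sum along row $p(\mathcal{T})$ is $|\mathcal{T}|=t$. For \eqref{eq:per_user_cache_size}, the column sum $\sum_{p}v_{p,k}$ counts the number of $t$-subsets of $[K]$ containing a fixed element $k$, which equals $\binom{K-1}{t-1}$; using the identity $\binom{K}{t}=\frac{K}{t}\binom{K-1}{t-1}$ yields $\binom{K-1}{t-1}=\frac{t}{K}\binom{K}{t}=\frac{Pt}{K}$, as required.

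No step is genuinely difficult; the only thing worth flagging is that the lemma speaks of equivalence of \emph{placement schemes}, not of full coded caching schemes, so I should be careful to state that the delivery-time subsplit into $Q$ subpackets is orthogonal to this statement — the $P\times Q$ factorization of \cite{shariatpanahi2017multi} is captured at the coarser "packet" level by $\mathbf{V}$, and the lemma asserts nothing beyond that. With this caveat spelled out, the proof is complete.
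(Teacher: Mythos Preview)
Your proof is correct and rests on the same bijection the paper uses between rows of $\mathbf{V}$ and $t$-subsets of $[K]$. The only cosmetic difference is direction: you build $\mathbf{V}$ from the subset placement of \cite{shariatpanahi2017multi} and verify validity via the counting identities, whereas the paper starts from an arbitrary valid $\binom{K}{t}\times K$ matrix and observes that its $\binom{K}{t}$ distinct rows with $t$ ones each must exhaust all $t$-subsets, hence recover the placement of \cite{shariatpanahi2017multi}.
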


\begin{proof}
Consider a valid placement matrix $\mathbf{V}$ with $P = \binom{K}{t}$ rows and $K$ columns. According to the validity condition \eqref{eq:per_part_cache_size}, there exist exactly $t$ non-zero elements at each row of $\mathbf{V}$. As a result, the rows of $\mathbf{V}$ include all possible combinations of $t$ non-zero elements at $K$ positions. Labelling each row with a set $\mathcal{T}$ consisting of column indices of non-zero elements in that row and using the same index $p(\mathcal{T})$ for packets as described in section \ref{Section:system_model}, we reach the same placement of \cite{shariatpanahi2017multi}.
\end{proof}

\begin{lem}
\label{lem:petros_placement}
If $\frac{K}{L}$ and $\frac{t}{L}$ are integers, then the cache placement scheme of \cite{lampiris2018adding} is equivalent to a valid placement matrix with $P=\binom{K/L}{t/L}$.
\end{lem}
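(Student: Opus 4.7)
The plan is to mirror the strategy of Lemma~\ref{lem:full_placement}: exhibit an explicit placement matrix that encodes the grouping-based scheme of \cite{lampiris2018adding}, and then check the two validity conditions \eqref{eq:per_user_cache_size}--\eqref{eq:per_part_cache_size} together with row distinctness. Since $K/L$ and $t/L$ are integers, the Lampiris--Elia scheme first partitions the $K$ users into $K/L$ disjoint groups $\mathcal{G}_1,\ldots,\mathcal{G}_{K/L}$ of size $L$, then applies the Maddah-Ali--Niesen placement at the group level, treating each group as one virtual user with normalized cache ratio $t/L$. Thus every file $W$ is split into $\binom{K/L}{t/L}$ parts $W_{\mathcal{S}}$ indexed by subsets $\mathcal{S}\subseteq[K/L]$ with $|\mathcal{S}|=t/L$, and $W_{\mathcal{S}}$ is replicated in the caches of every user belonging to a group whose index lies in $\mathcal{S}$.

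To convert this to our matrix language, I would let $g(k)\in[K/L]$ denote the index of the group containing user $k$, list the $P=\binom{K/L}{t/L}$ valid index sets in some fixed order $\mathcal{S}_1,\mathcal{S}_2,\ldots,\mathcal{S}_P$, and define
\begin{equation*}
v_{p,k} \;=\; \mathbf{1}\bigl[g(k)\in\mathcal{S}_p\bigr].
\end{equation*}
By construction, placing $W_p$ in the cache of user $k$ whenever $v_{p,k}=1$ reproduces exactly the Lampiris--Elia placement, so the equivalence will follow once I verify that $\mathbf{V}=(v_{p,k})$ is valid in the sense of Section~\ref{subsec:placement_matrix}.

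The three checks are straightforward combinatorial identities. For row~$p$, the ones sit precisely in the columns $k$ with $g(k)\in\mathcal{S}_p$; since each group has $L$ members and $|\mathcal{S}_p|=t/L$, this row has $L\cdot(t/L)=t$ ones, giving \eqref{eq:per_part_cache_size}. For column~$k$, the ones correspond to sets $\mathcal{S}_p$ that contain the fixed index $g(k)$, of which there are $\binom{K/L-1}{t/L-1}$; the identity
\begin{equation*}
\binom{K/L-1}{t/L-1} \;=\; \frac{t/L}{K/L}\binom{K/L}{t/L} \;=\; \frac{Pt}{K}
\end{equation*}
yields \eqref{eq:per_user_cache_size}. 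Row distinctness is immediate because distinct subsets $\mathcal{S}_p\neq\mathcal{S}_{p'}$ of $[K/L]$ differ on at least one group index, and all $L$ users of that group witness a difference in the corresponding columns.

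I do not anticipate a real obstacle here; the argument is essentially bookkeeping. The one place that requires a little care is making precise what ``equivalent to'' means, i.e.\ clarifying that two placements are to be identified whenever they assign the same (multi)set of file fragments to each cache up to relabeling of parts. Once that identification is stated, the bijection between the rows of $\mathbf{V}$ and the subpackets $W_{\mathcal{S}}$ of \cite{lampiris2018adding} makes the equivalence immediate, and the validity checks above close the proof.
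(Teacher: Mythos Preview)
Your proposal is correct and follows exactly the approach the paper indicates: it labels the rows of a $\binom{K/L}{t/L}\times K$ matrix by the $(t/L)$-subsets of $[K/L]$ used in the group-level Maddah-Ali--Niesen placement, precisely mirroring the labeling argument of Lemma~\ref{lem:full_placement}. The paper's own proof is in fact just the one-line remark that the result follows ``similar to Lemma~\ref{lem:full_placement}'' via this labeling, so you have simply spelled out the bookkeeping it leaves implicit.
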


\begin{proof}
Similar to Lemma \ref{lem:full_placement}, follows from labeling rows in a valid placement matrix with $P=\binom{K/L}{t/L}$.
\end{proof}

Lemmas \ref{lem:full_placement} and \ref{lem:petros_placement} provide higher and lower boundaries for subpacketization, respectively. Any other $\binom{K/L}{t/L} < P < \binom{K}{t}$, for which validity conditions hold, provides a possible cache placement with a different subpacketization value.

\subsection{Delivery Scheme}
\label{subsec:delivery_scheme}
Assume cache placement is performed according to a given valid placement matrix $\mathbf{V}$. At the start of the delivery phase, each user $k$ reveals its requested file $W(k)$. 
For each $\mathcal{V} \subseteq [K]$ with $|\mathcal{V}| = t+1$ we create a codeword $X(\mathcal{V})$ such that all users in $\mathcal{V}$ can decode part of their missing data (the parts for which the corresponding $v_{p,k}$ is zero) with the help of $X(\mathcal{V})$. Defining the \textit{corresponding packet set} of $\mathcal{V}$ as
\begin{equation}
\label{eq:corresponding_packet_set}
    \Phi(\mathcal{V}) = \{ p \in [P] \quad \mid \quad v_{p,k} = 0, \; \forall k \in [K] \backslash \mathcal{V} \} \; ,
\end{equation}
the codeword $X(\mathcal{V})$ is built as
\begin{equation}
\label{eq:codeword_base}
    X(\mathcal{V}) = \bigoplus \limits_{\substack{k \in \mathcal{V} \\ p \in \Phi(\mathcal{V})}} (1-v_{p,k})W_p(k) \; .
\end{equation}

As a brief explanation, according to the placement condition~\eqref{eq:per_part_cache_size}, each row of $\mathbf{V}$ has exactly $t$ non-zero and $K-t=L$ zero elements. Also as we have $L$ antennas, it is possible to null (or suppress) $X(\mathcal{V})$ at $L-1$ users. The set $\Phi(\mathcal{V})$ contains packet indices for which the corresponding $v_{p,k}$ is zero for all $L-1$ users in $[K] \backslash \mathcal{V}$, and so for each $p \in \Phi(\mathcal{V})$, there exists exactly one $k \in \mathcal{V}$ such that $v_{p,k} = 0$. This enables each user $k \in \mathcal{V}$ to remove unwanted terms from $X(\mathcal{V})$ using its cache contents, and decode its missing packet accordingly.

Nulling (or suppressing) each $X(\mathcal{V})$ at all $L-1$ users in $[K] \backslash \mathcal{V}$ also enables us to transmit all codewords simultaneously, thus achieving the full multicasting gain of $t+L$.
Defining $\mathbf{u}(\mathcal{V})$ such that $\| \mathbf{u}(\mathcal{V}) \| = 1$ and $\mathbf{u}(\mathcal{V}) \perp \mathbf{h}_k$ for all $k \in [K] \backslash \mathcal{V}$, we can build the transmission vector as in~\eqref{eq:transmission_vector_power_optimization}. 
Alternatively, instead of ZF we can use the optimized beamformer as in~\eqref{eq:transmission_vector_beamformer}. In this case, $X(\mathcal{V})$ is considered as Gaussian interference instead of being nulled at users $k \in [K] \backslash \mathcal{V}$, and the resulting symmetric rate will be higher~\cite{tolli2017multi,tolli2018linear}.

According to the above explanation, all users in $\mathcal{V}$ can decode part of their requested data after receiving $X(\mathcal{V})$; and as each missing data part is addressed in one set $\mathcal{V}$, building $\mathbf{x}$ as~\eqref{eq:transmission_vector_power_optimization} enables all users to decode every missing data part and completely decode their requested files (in Example~\ref{exmp:eff_ind} this procedure is reviewed for a simple network). It should be noted that each user faces a MAC channel after transmission of $\mathbf{x}$, and transmission rate and power should be adjusted that simultaneous decoding of all missing data parts is possible.



\subsection{Efficiency Index}
Assume the transmission vector $\mathbf{x}$ is built as in~\eqref{eq:transmission_vector_power_optimization}\footnote{For notational convenience we assume ZF beamforming, but the following discussion also holds for the optimized beamformer~\eqref{eq:transmission_vector_beamformer}.}. After $\mathbf{x}$ is transmitted, user $k$ receives
\begin{equation}
    y(k) = \mathbf{h}_k^T \mathbf{x} + z_k
\end{equation}
where 
$z_k$ is the additive Gaussian noise at user $k$\footnote{For $\mathbf{x}$ built as in~\eqref{eq:transmission_vector_beamformer}, $z_k$ would also contain Gaussian interference terms from all $\mathbf{w}(\mathcal{V})X(\mathcal{V}) \mid k \not\in \mathcal{V}$~\cite{tolli2017multi,tolli2018linear}.}. For each constituting term of $\mathbf{x}$, one of the following is possible:
\begin{enumerate}
    \item $k \not\in \mathcal{V}$:  $\mathbf{h}_k \perp \mathbf{u}(\mathcal{V})$ and hence the term $X(\mathcal{V})$ is nulled (or suppressed by $\mathbf{w}(\mathcal{V})$) at user $k$;
    \item $k \in \mathcal{V}$ and $\exists \; p \in \Phi(\mathcal{V}) \mid v_{p,k} = 0$: the term $X(\mathcal{V})$ is received, from which  the user $k$ can then decode the desired part $W_p(k)$;
    \item $k \in \mathcal{V}$ and $\nexists \; p \in \Phi(\mathcal{V}) \mid v_{p,k} = 0$: the term $X(\mathcal{V})$ is received at user $k$, but does not contain any missing packet for user $k$ and so is entirely removed using its cache contents.
\end{enumerate}

The third possibility can be interpreted as a case of \textit{power loss}, as the power used to transmit the deleted terms is not used for increasing the signal quality at that particular user. This is the reasoning behind our definition of the efficiency index. Let $\theta(k)$ denote the number of terms completely removed by cache contents of user $k$. Then the efficiency index $\Gamma(k)$ is 
\begin{equation}
\label{eq:eff_index_main}
    \Gamma(k) = 1 - \frac{\theta(k)}{\phi(\mathbf{x})} \; ,
\end{equation}
where $\phi(\mathbf{x})$ is the total number of terms transmitted in transmission vector $\mathbf{x}$. 

\begin{exmp}
\label{exmp:eff_ind}
Consider the cache placement matrix $\mathbf{V}'$ in~\eqref{eq:V_mat_4_2_Full} and assume the demand set is $\{A,B,C,D\}$. After transmission of $\mathbf{x}$ is concluded, user 1 receives
\begin{equation}
    \label{eq:exmp_k4_y1}
    \begin{aligned}
        y(1) &= (B_3 \oplus D_2) \mathbf{h}_1^T \mathbf{u}_1 p_1  + (A_3 \oplus C_4) \mathbf{h}_1^T \mathbf{u}_2 p_2 \\
        +&(B_4 \oplus D_1) \mathbf{h}_1^T \mathbf{u}_3 p_3 + (A_2 \oplus C_1) \mathbf{h}_1^T \mathbf{u}_4 p_4 + z_1 \; ,
    \end{aligned}
\end{equation}
where $\mathbf{u}_k = \mathbf{u}(\mathcal{V}_k)$, $p_k = p(\mathcal{V}_k)$ and $\mathcal{V}_k = [4] \backslash \{k \}$ (as $t=2$, each $\mathcal{V}$ includes three users). Considering RHS of \eqref{eq:exmp_k4_y1}, $B_3 \oplus D_2$ in nulled (as $\mathbf{u}_1 \perp \mathbf{h}_1$), while $A_3 \oplus C_4$ and $A_2 \oplus C_1$ contain useful data ($A_2$ and $A_3$; $C_4$ and $C_1$ are removed by the cache). However, $B_4 \oplus D_1$ is entirely removed using the cache content of user 1. Therefore, $\phi(\mathbf{x}) = 4$, $\theta(1) = 1$ and the efficiency index of user 1 becomes $\Gamma(1) = 0.75$. It can be easily checked that the efficiency index of other users is the same and independent of the request vector.
\end{exmp}

\subsection{Building the Placement Matrix}
\label{subsec:placement_matrix_build}
In order to build valid placement matrices, we introduce \textit{placement blocks}. Denoted by $\mathbf{\Lambda}_t^K(p)$, a placement block of size $p$ for $K$ users with global memory ratio $t$ is a $p \times K$ binary matrix, for which:
\begin{enumerate}
    \item Placement validity conditions \eqref{eq:per_user_cache_size} and \eqref{eq:per_part_cache_size} are met;
    \item Each row can be transformed into any other row using circular shift operations.
\end{enumerate}
For example, $\mathbf{V}$ and $\mathbf{V}'$ in \eqref{eq:V_mat_4_2_Petros} and \eqref{eq:V_mat_4_2_Full} are $\mathbf{\Lambda}_2^4(2)$ and $\mathbf{\Lambda}_2^4(4)$, respectively. It should be noted that $\mathbf{\Lambda}_t^K(p)$ is not necessarily unique (examples will be provided in Section \ref{Section:Simulation_Results}). Moreover, it can be easily checked that the column-wise concatenation of two (or more) placement blocks results in a valid placement matrix. For example, concatenation of $\mathbf{V}$ and $\mathbf{V}'$ results in $\mathbf{V}''$ as mentioned in \eqref{eq:V_mat_4_2_Full}, which is a valid placement matrix. 

For general $K$ and $t$ values, one can easily build a placement block by considering a random $1 \times K$ vector with $t$ non-zero elements as the first row of the block, and building each new row by circular shifting previous row for one unit, until the first row is repeated. The resulting matrix is a placement block. These placement blocks can then be concatenated freely to build valid placement matrices with different $P$ values.

\subsection{Extension to $K > t+L$}
So far we have only considered networks with $K = t+L$. If $K < t+L$, as the number of users falls below the maximum multicasting gain, providing an appropriate coded caching scheme becomes very straightforward. Specifically, any solution for a network with $L' = K-t$ antennas works for the same network with $L > L'$ antennas. Moreover, as mentioned in \cite{tolli2017multi, tolli2018linear} the extra antennas provide increased flexibility and gain for the multicast beamformer design.

In case $K > t+L$ the solution is not trivial. Defining $Q$ as~\eqref{eq:Q_def} 
we provide a scheme with total subpacketization $P \times Q$, where $P$ can be any number for which a valid placement matrix of dimensions $P \times K$ and global cache ratio $t$ exists.

During the placement phase each file $W$ is first split into $P$ packets denoted by $W_p$, and then each packet is further split into $Q$ subpackets $W_p^q$. Assume $\mathbf{V}$ is the given valid placement matrix and $W$ is a file in the library. For each $k \in [K]$ and $p \in [P]$, if $v_{p,k} = 1$ we store $W_p^q$ for all $q \in [Q]$ in the cache memory of user $k$.

For delivery, we select all subsets $\mathcal{S} \subset [K]$ with $|\mathcal{S}| = t+L$. For each $\mathcal{S}$, we generate matrix $\mathbf{V}^{\mathcal{S}}$ of size $P \times K$ and elements $v^{\mathcal{S}}_{p,k}$, using the procedure of Alg.~\ref{alg:newplacement}. $\mathbf{V}^{\mathcal{S}}$ is then used as the input to Algorithm \ref{alg:delivery_general}, which executes exactly the same procedure described in Section \ref{subsec:delivery_scheme}, to build the transmission vector $\mathbf{x}(\mathcal{S})$. 

\begin{algorithm}
\caption{Generate $\mathbf{V}^{\mathcal{S}}$}
\label{alg:newplacement}
\begin{algorithmic}[1]
    \Procedure{NewPlacement}{$\mathbf{V}, \mathcal{S}$}
        \State $\mathcal{S}' \gets [K] \backslash \mathcal{S}$,
        $\Phi(\mathcal{S}') \gets 0$
        \ForAll{$p \in [P]$}
            \If{$v_{p,k} = 1, \; \forall k \in \mathcal{S}'$}
                \State $\Phi(\mathcal{S}') \gets \Phi(\mathcal{S}') \cup \{ p \}$
            \EndIf
        \EndFor
        \ForAll{$p \in \Phi(\mathcal{S}'), \; k \in [K]$}
            \State $v^{\mathcal{S}}_{p,k} \gets 1$
        \EndFor
        \ForAll{$p \in [P], \; k \in \mathcal{S}'$}
            \State $v^{\mathcal{S}}_{p,k} \gets 1$
        \EndFor
        \ForAll{$p \in [P] \backslash \Phi(\mathcal{S}'), \; k \in \mathcal{S}$}
            \State $v^{\mathcal{S}}_{p,k} \gets v_{p,k}$
        \EndFor
    \EndProcedure
\end{algorithmic}
\end{algorithm}

\begin{algorithm}
\caption{Build Transmission Vector $\mathbf{x}(\mathcal{S})$}
\label{alg:delivery_general}
\begin{algorithmic}[1]
    \Procedure{BuildTX}{$\mathcal{S}, \mathbf{V^{\mathcal{S}}}, \{W(1),W(2),...,W(K) \}$}
            \State $\mathbf{x}(\mathcal{S}) \gets 0$
            \ForAll{$\mathcal{V} \subseteq \mathcal{S}$ with $|\mathcal{V}| = t+1$}
                \State $\Phi(\mathcal{V}) \gets \varnothing$
                \ForAll{$p \in [P]$}
                    \If{$v_{p,k} = 0, \; \forall k \in [K] \backslash \mathcal{V}$}
                        \State $\Phi(\mathcal{V}) \gets \Phi(\mathcal{V}) \cup \{ p \}$
                    \EndIf
                \EndFor
                \State $X(\mathcal{V}) \gets 0$
                \ForAll{$p \in \Phi(\mathcal{V}), k \in \mathcal{V}$}
                    \If{$v^{\mathcal{S}}_{p,k} = 0$}
                        \State $X(\mathcal{V}) \gets X(\mathcal{V}) \oplus W_p^{q(W(k),k)}(k)$
                        \State $q(W(k),k) \gets q(W(k),k) + 1$
                    \EndIf
                \EndFor
                \State $\mathbf{x}(\mathcal{S}) \gets \mathbf{x}(\mathcal{S}) + X(\mathcal{V}) \times \mathbf{u}(\mathcal{V}) \times p(\mathcal{V})$
            \EndFor
    \EndProcedure
\end{algorithmic}
\end{algorithm}

Note that $\mathbf{V}^{\mathcal{S}}$ is generated such that no term in $\mathbf{x}(\mathcal{S})$ needs to be nulled (or suppressed) at more than $L-1$ users. In Algorithm \ref{alg:delivery_general}, the initial values of superscripts $q(n,k)$ are all set to one.

\begin{exmp}
Assume $K=5$, $t=L=2$ and the file library is $\{ A,B,C,D,E \}$. The scheme of \cite{shariatpanahi2017multi} requires subpacketization $\binom{5}{2} \binom{2}{1} = 20$, but using the cache placement matrix
\begin{equation}
    \mathbf{V} = 
    \begin{bmatrix}
        1 & 1 & 0 & 0 & 0 \\
        0 & 1 & 1 & 0 & 0 \\
        0 & 0 & 1 & 1 & 0 \\
        0 & 0 & 0 & 1 & 1 \\
        1 & 0 & 0 & 0 & 1
    \end{bmatrix} \; ,
\end{equation}
the required subpacketization is reduced to $5 \times \binom{2}{1} = 10$. For the example demand set $\{ A,B,C,D,E \}$ we need $\binom{5}{4} = 5$ consecutive transmissions, which are built as
\begin{equation*}
\begin{aligned}
    \mathbf{x}(\mathcal{S}_1) &= (D_2^1 \oplus B_3^1) \mathbf{u}_5 + (E_3^1 \oplus C_4^1) \mathbf{u}_2 + E_2^1 \mathbf{u}_4 + B_4^1 \mathbf{u}_3 \\
    \mathbf{x}(\mathcal{S}_2) &= (E_3^2 \oplus C_4^2) \mathbf{u}_1 + (A_4^1 \oplus D_5^1) \mathbf{u}_3 + A_3^1 \mathbf{u}_5 + C_5^1 \mathbf{u}_4 \\
    \mathbf{x}(\mathcal{S}_3) &= (A_4^2 \oplus D_5^2) \mathbf{u}_2 + (E_1^1 \oplus B_5^1) \mathbf{u}_4 + B_4^2 \mathbf{u}_1 + D_1^1 \mathbf{u}_5 \\
    \mathbf{x}(\mathcal{S}_4) &= (E_1^2 \oplus B_5^2) \mathbf{u}_3 + (C_1^1 \oplus A_2^1) \mathbf{u}_5 + C_5^2 \mathbf{u}_2 + E_2^2 \mathbf{u}_1 \\
    \mathbf{x}(\mathcal{S}_5) &= (C_1^2 \oplus A_2^2) \mathbf{u}_4 + (D_2^2 \oplus B_3^2) \mathbf{u}_1 + D_1^2 \mathbf{u}_3 + A_3^2 \mathbf{u}_2 \\
\end{aligned}
\end{equation*}
where $\mathcal{S}_k = [K] \backslash \{ k \}$, $\mathbf{u}_k$ is defined such that $\| \mathbf{u}_k \| = 1$ and $\mathbf{u}_k \perp \mathbf{h}_k$, and we have ignored power coefficients for notational simplicity. It can be easily checked that after all the transmissions are concluded, all users can decode their requested files.
\end{exmp}

\section{Simulation Results}
\label{Section:Simulation_Results}
We provide simulation results for two network setups, with parameters shown in Table \ref{tab:Sim_Param}. In order to build valid placement matrices we use placement block concatenation, as described in Section \ref{subsec:placement_matrix_build}. Due to lack of space we explicitly explain the procedure for Network 1 only. For this network, there exist two placement blocks of size $6 \times 6$, which are
\begin{equation*}
    \mathbf{V}_1 = 
    \begin{bmatrix}
        1 & 1 & 0 & 0 & 0 & 0 \\
        0 & 1 & 1 & 0 & 0 & 0 \\
        0 & 0 & 1 & 1 & 0 & 0 \\
        0 & 0 & 0 & 1 & 1 & 0 \\
        0 & 0 & 0 & 0 & 1 & 1 \\
        1 & 0 & 0 & 0 & 0 & 1 \\
    \end{bmatrix},
    \mathbf{V}_2 =
    \begin{bmatrix}
        1 & 0 & 1 & 0 & 0 & 0 \\
        0 & 1 & 0 & 1 & 0 & 0 \\
        0 & 0 & 1 & 0 & 1 & 0 \\
        0 & 0 & 0 & 1 & 0 & 1 \\
        1 & 0 & 0 & 0 & 1 & 0 \\
        0 & 1 & 0 & 0 & 0 & 1 \\
    \end{bmatrix},
\end{equation*}
and another placement block of size 3: 
\begin{equation*}
    \mathbf{V}_3 = 
    \begin{bmatrix}
        1 & 0 & 0 & 1 & 0 & 0 \\
        0 & 1 & 0 & 0 & 1 & 0 \\
        0 & 0 & 1 & 0 & 0 & 1 \\
    \end{bmatrix} \; .
\end{equation*}
$\mathbf{V}_1$ and $\mathbf{V}_2$ are confirming examples that placement blocks of a given size are not necessarily unique. Using $\mathbf{V}_1$, $\mathbf{V}_2$, $\mathbf{V}_3$ and their concatenations, one can create valid placement matrices with $P$ values mentioned in Table~\ref{tab:Sim_Param}\footnote{Although for Network 1 all possible $P$ values are mentioned, the $P$ values considered for Network 2 are a subset of all possible values. For example, there exists a placement with $P=8$, which is not considered.}. For example, $P=9$ can be created by concatenating either $\mathbf{V}_1$ or $\mathbf{V}_2$ with $\mathbf{V}_3$, and $P=12$ can be created by concatenating $\mathbf{V}_1$ and $\mathbf{V}_2$. The case $P=15$ is equivalent to the scheme of \cite{shariatpanahi2017multi} and is created by concatenating all the three blocks. It is worth mentioning that while the scheme presented in \cite{lampiris2018adding} is not readily applicable for Network 1 (as $\frac{K}{L}$ is not an integer), our scheme easily provides a cache placement with subpacketization $P=3$.

\begin{table}[H]
    \centering
    \begin{tabular}{|c||c|c|}
        \hline
        & Network 1 & Network 2 \\
        \hline \hline
        $K$ & 6 & 6  \\
        \hline
        $t$ & 2 & 3 \\
        \hline
        $L$ & 4 & 3 \\
        \hline
        $P$ values & $\{3,6,9,12,15 \}$ & $\{2,6,12,18,20 \}$ \\
        \hline
    \end{tabular}
    \caption{\label{tab:Sim_Param} Simulation Parameters}
\end{table}

In Figures \ref{fig:k6_t2_plot} and \ref{fig:k6_t3_plot} we have plotted simulation results for Networks 1 and 2, respectively. For simulations we have used optimized beamformers as in~\eqref{eq:transmission_vector_power_optimization}~\cite{tolli2017multi}. Moreover, efficiency indices corresponding to different $P$ values for both networks are presented in Table \ref{tab:eff_index}. It can be verified that higher subpacketization usually\footnote{There exist special cases for which higher subpacketization reduces the efficiency index. This is not covered here due to lack of space.} results in better efficiency index; which in turn means higher symmetric rate. Also, as the slopes are the same for all curves in Figures \ref{fig:k6_t2_plot} and \ref{fig:k6_t3_plot}, it is clear that selection of $P$ is not affecting the multicasting gain.

\begin{figure}
    \centering
    \resizebox{0.9\columnwidth}{!}{%

    \begin{tikzpicture}

    \begin{axis}
    [
    axis lines = left,
    xlabel = \smaller {SNR [dB]},
    ylabel = \smaller {Symmetric Rate [nats/s]},
    ylabel near ticks,
    legend pos = south east,
    ticklabel style={font=\smaller},
    grid=both,
    major grid style={line width=.2pt,draw=gray!30},
    ]
    
    \addplot
    [
    mark = +,
    black
    ]
    table
    [
    y=P15,
    x=SNR
    ]
    {Data/K6t2_Plot.txt};
    \addlegendentry{\smaller $P = 15$}
    
    \addplot
    [
    dashed,
    mark = +,
    gray
    ]
    table
    [
    y=P12,
    x=SNR
    ]
    {Data/K6t2_Plot.txt};
    \addlegendentry{\smaller $P = 12$}
    
    \addplot
    [
    mark = x,
    black
    ]
    table
    [
    y=P9,
    x=SNR
    ]
    {Data/K6t2_Plot.txt};
    \addlegendentry{\smaller $P = 9$}
    
    \addplot
    [
    dashed,
    mark = x,
    black
    ]
    table
    [
    y=P6,
    x=SNR
    ]
    {Data/K6t2_Plot.txt};
    \addlegendentry{\smaller $P = 6$}
    
    \addplot
    [
    mark = square,
    black
    ]
    table
    [
    y=P3,
    x=SNR
    ]
    {Data/K6t2_Plot.txt};
    \addlegendentry{\smaller $P = 3$}
    
    \end{axis}

    \end{tikzpicture}
    }

    \caption{Rate vs SNR, Optimized Beamformer - $K=6$, $t=2$, $L=4$}
    \label{fig:k6_t2_plot}
\end{figure}

\begin{figure}
    \centering
    \resizebox{0.9\columnwidth}{!}{%
        \begin{tikzpicture}

    \begin{axis}
    [
    axis lines = left,
    xlabel = \smaller {SNR [dB]},
    ylabel = \smaller {Symmetric Rate [nats/s]},
    ylabel near ticks,
    legend pos = south east,
    ticklabel style={font=\smaller},
    grid=both,
    major grid style={line width=.2pt,draw=gray!30},
    ]
    
    \addplot
    [
    mark = +,
    black
    ]
    table
    [
    y=P20,
    x=SNR
    ]
    {Data/K6t3_Plot.txt};
    \addlegendentry{\smaller $P = 20$}
    
    \addplot
    [
    dashed,
    mark = +,
    gray
    ]
    table
    [
    y=P18,
    x=SNR
    ]
    {Data/K6t3_Plot.txt};
    \addlegendentry{\smaller $P = 18$}
    
    \addplot
    [
    mark = x,
    black
    ]
    table
    [
    y=P12,
    x=SNR
    ]
    {Data/K6t3_Plot.txt};
    \addlegendentry{\smaller $P = 12$}
    
    \addplot
    [
    dashed,
    mark = x,
    black
    ]
    table
    [
    y=P6,
    x=SNR
    ]
    {Data/K6t3_Plot.txt};
    \addlegendentry{\smaller $P = 6$}
    
    \addplot
    [
    mark = square,
    black
    ]
    table
    [
    y=P2,
    x=SNR
    ]
    {Data/K6t3_Plot.txt};
    \addlegendentry{\smaller $P = 2$}
    
    \end{axis}

    \end{tikzpicture}
    }
    \caption{Rate vs SNR, Optimized Beamformer - $K=6$, $t=3$, $L=3$}
    \label{fig:k6_t3_plot}
\end{figure}

\begin{table}[H]
    \centering
    \begin{tabular}{|c||c|c|c|c|c|}
        \hline
        \multirow{2}{*}{Network 1} & $P=3$ & $P=6$ & $P=9$ & $P=12$ & $P = 15$ \\
        \cline{2-6}
        & 0.667 & 0.722 & 0.833 & 0.90 & 1.00 \\
        \hline
        \hline
        \multirow{2}{*}{Network 2} & $P=2$ & $P=6$ & $P=12$ & $P=18$ & $P = 20$ \\
        \cline{2-6}
        & 0.50 & 0.583 & 0.733 & 0.933 & 1.00 \\
        \hline
    \end{tabular}
    \caption{Efficiency Index for Various $P$ Values}
    \label{tab:eff_index}
\end{table}

Finally, in Figure \ref{fig:k6_t2_bar} we have plotted the rate advantage of various subpacketization levels over $P=3$, versus SNR for Network 1 (similar results hold for Network 2). It is clear that higher subpacketization results in better rate advantage, specially at low-SNR.

\begin{figure}
    \centering
    \resizebox{0.9\columnwidth}{!}{%
   
    \begin{tikzpicture}
    \begin{axis}
    [
    axis lines = left,
    xlabel = \smaller {SNR [dB]},
    ylabel = \smaller {Rate Advantage [\%]},
    ylabel near ticks,
    legend pos = north east,
    ticklabel style={font=\smaller},
    ybar interval=0.7,
    ]
    
    \addplot
    [
    black,
    fill = gray!10,
    ]
    table
    [x=SNR,y=P15]
    {Data/K6t2_Bar.txt};
    \addlegendentry{\smaller $P=15$}

    \addplot 
    [
    black,
    fill = gray!30,
    ]
    table
    [x=SNR,y=P12]
    {Data/K6t2_Bar.txt};
    \addlegendentry{\smaller $P=12$}
    
    \addplot 
    [
    black,
    fill = gray!50,
    ]
    table
    [x=SNR,y=P9]
    {Data/K6t2_Bar.txt};
    \addlegendentry{\smaller $P=9$}
    
    \addplot 
    [
    black,
    fill = gray!70,
    ]
    table
    [x=SNR,y=P6]
    {Data/K6t2_Bar.txt};
    \addlegendentry{\smaller $P=6$}
    
    \end{axis}
    \end{tikzpicture}
    }
    \caption{Rate Advantage over $P=3$ - $K=6$, $t=2$, $L=4$}
    \label{fig:k6_t2_bar}
\end{figure}
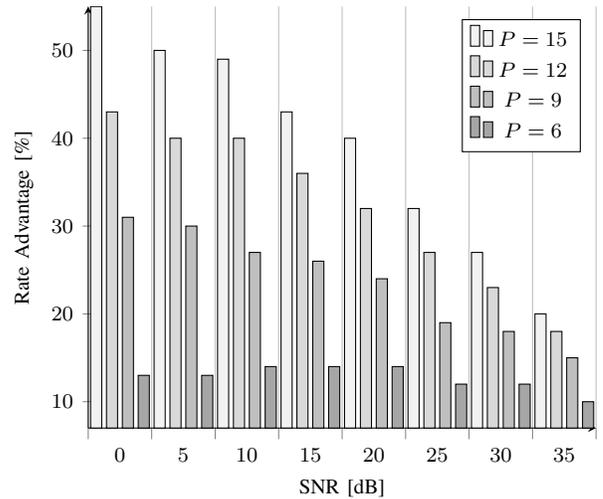

\section{Conclusion and Future Work}
\label{Section:conclusion}
We provided a new scheme for multi-antenna coded caching, which enables the subpacketization level to be selected freely among a set of predefined values depending on basic network parameters such as user count, antenna count and global cache ratio. We also proposed a simple efficiency index as a performance indicator (in terms of symmetric rate) at various subpacketization levels. Numerical examples demonstrate that larger subpacketization generally results in larger efficiency index and higher symmetric rate, while low subpacketization incurs significant loss in the achievable rate. Our scheme enables more efficient caching decisions, tailored to the available computational and power resources.

Possible future research directions include finding upper bounds for the achievable symmetric rate for various subpacketization levels, further simplifying the scheme for large networks, and thorough comparison of various beamformer design techniques (ZF versus optimized design versus approximate solutions).

\bibliographystyle{IEEEtran}
\bibliography{main}

\end{document}